\newtheorem{theorem}{Theorem}
\newtheorem{assumption}{Assumption}[section]
\newtheorem{lemma}{Lemma}[section] 
\title{Theoretical Analysis of Impact of Delayed Updates on Decentralized Federated Learning}
\author{Yong Zeng\footnote{Sichuan Tengden Technology Co.,Ltd. E-mail: run8686@aliyun.com}, Siyuan Liu\footnote{College of Information Engineering, Inner Mongolia University of Technology, Hohhot 100080, China. Email: siyuan\_liu2022@foxmail.com}, Zhiwei Xu\footnote{Corresponding Author.
\\ Institute of Computing Technology Chinese Academy of Sciences, \mbox{Beijing 53035, China}. \\ Haihe Laboratory of Information Technology Application Innovation, Tianjin 300350, China. Email: xuzhiwei2001@ict.ac.cn} and Jie Tian\footnote{Department of Computer Science, New Jersey Institute of Technology, Newark NJ 07102, USA. Email: jt66@njit.edu}}
\begin{document}
\date{}
\maketitle

\begin{abstract}
 Decentralized Federated learning is a distributed edge intelligence framework by exchanging parameter updates instead of training data among participators, in order to retrain or fine-tune deep learning models for mobile intelligent applications. Considering the various topologies of edge networks in mobile internet, the impact of transmission delay of updates during model training is non-negligible for data-intensive intelligent applications on mobile devices, e.g., intelligent medical services, automated driving vehicles, etc..  
To address this problem, we analyze the impact of delayed updates for decentralized federated learning, and provide a theoretical bound for these updates to achieve model convergence. 
Within the theoretical bound of updating period, the latest versions for the delayed updates are reused to continue aggregation, in case the model parameters from a specific neighbor are not collected or updated in time. 

\noindent{\textbf{Keywords:}Edge intelligence, Decentralized federated learning, Heterogeneous networking topology and resources, Theoretical bound for delayed updates}
\end{abstract}

\section{INTRODUCTION}\label{sec1}

Consider a non-convex stochastic optimization problems of the form: 
\begin{equation}\label{in}
\begin{array}{ll}
\min & f(x):=\sum_{k=1}^K \mathbb{E}_{\xi \sim \mathcal{D}} g_k(x; \xi) \\
\text { s.t. } & x \in X
\end{array}
\end{equation}
where $g_k(x)$ are a set of smooth, possibly non-convex functions and represent the loss of a model parameterized by $x$ on the datum $\xi$. $\mathcal{D}$ represents a finite dataset of size $n$ or a population distribution depending on the application. Eq.\ref{in} is a machine learning (ML) training objective, typically computed by stochastic gradient descent (SGD) algorithm\cite{bottou2010large} and its variants, \emph{e.g.}, momentum SGD, Adam, etc. Such stochastic optimization problems continue to grow rapidly, both in terms of the number of model parameters in ML and the quantity of data. Given the recent growth in the size of models and available training data, there is an escalating need to employ parallel optimization algorithms to handle large-scale data and harness the benefits of data distributed across different machines. In a distributed setting, also known as data-parallel training, optimization is spread across numerous computational devices operating in parallel, such as cores or GPUs on a cluster, to expedite the training process. Every worker computes gradients on a subset of the training data, and the resulting gradients are aggregated (averaged) on a server. Parallel and distributed versions of SGD are becoming increasingly important\cite{zinkevich2010parallelized}. 

There are several ways of employing parallelism to solve Eq.\ref{in}. Among them, federated learning (FL) is a privacy-preserving distributed machine learning paradigm. In FL, Each worker performs local model training using its own data and sends only model updates (gradients) back to the central server. These updates are aggregated to refine the global model, and the entire process is iterative, with multiple rounds of local training and aggregation, gradually improving the model's performance without exposing sensitive data. This is a commonly used in practice\cite{mcmahan2017communication}. In a centralized model, the central server acts as a single point of control, raising concerns about data security and monopolization of learning processes. The central entity is compromised or misused will result in privacy breaches or a concentration of power. In contrast, decentralized federated learning (DFL) distributes the control and aggregation of model updates across multiple nodes or entities in a network. This decentralization enhances security, prevents single points of failure, and fosters collaboration while preserving data privacy. However, since traditional FL method waits for all machines to finish computing their gradient estimates before updating, it proceeds only at the speed of the slowest machine.

When a large volume of data is distributed across computational nodes, local computations can be costly and time-consuming. If synchronous algorithms are used, the slowest node can significantly impede the overall system's performance. Several potential sources of delay exist: nodes may possess heterogeneous hardware with varying computational throughputs\cite{wu2017stochastic}, network latency can slow down gradient communication, and nodes may even drop out\cite{hung2018delay}. In many naturally parallel settings, slower "lagging" nodes can arise, including scenarios involving multiple GPUs\cite{yi2020fast} or training ML models in the cloud\cite{luo2020plink}. Sensitivity to these lagging nodes poses a significant challenge for synchronous algorithms. In fact, in fully decentralized environments where there is no clock synchronization, minimal coordination between distributed nodes, and no guaranteed mechanism for reliable communication, ideal distributed algorithms should be robust enough to handle various asynchronous sources while still producing high-quality solutions within a reasonable time frame.  

Since the seminal work of Bertsekas\cite{bertsekas2015parallel} and Tsitsiklis\cite{tsitsiklis1986distributed}, there has been a large body of literature focusing on asynchronous implementation of various distributed schemes. Hao et al.\cite{zhu2010distributed} discussed the performance differences between distributed demodulation and centralized counterparts, showing that there are only a few consensus iterations, suffice for the distributed demodulators to approach the performance of their centralized counterparts. Nedić et al.\cite{nedic2001distributed} discuss a distributed asynchronous subgradient method for minimizing convex functions consisting of the sum of a large number of component functions. At each step, some outdated gradients can be updated.
In \cite{liu2014asynchronous}, an asynchronous parallel stochastic coordinate descent algorithm is developed that achieves linear convergence speed, where the update of each block can exploit delayed gradient information.
Agarwal et al.\cite{agarwal2011distributed} show that uncertainty can also be tolerated in stochastic optimization. Furthermore, they demonstrate that the convergence speed depends on delayed stochastic gradient information.
Zhou et al. \cite{zhou2018distributed} discuss the delay problem in distributed computing, allowing the delay to grow with time, but only showing asymptotic convergence.
In typical FL applications, clients or workers often exhibit significant disparities in computational capabilities/speed. Consequently, practitioners favor the utilization of asynchronous algorithms in FL\cite{nguyen2022federated, avdiukhin2021federated, gu2021fast}, with much effort directed towards addressing the unequal participation rates among diverse clients through the implementation of variance reduction techniques on the server side\cite{yan2020distributed, yang2022anarchic}.

In this paper, we analyze the impact of delayed updates for decentralized federated learning, and provide a theoretical bound for these updates to achieve model convergence. Within the theoretical bound of updating period, the latest versions for the delayed updates are reused to continue aggregation, in case the model parameters from a specific neighbor are not collected or updated in time.

The rest of this paper is organized as follows.  The problem formulation are studied in In Section 2. In Section 3, we theoretically analyze the impact of delayed updates for decentralized federated learning delay. Finally, we give our conclusions about this work.

\section{Problem Formulation}

Federated learning is an incremental optimization algorithm in which a large population of devices collaboratively trains a neural network model. The traditional FL distributed architecture is based on a central server with several clients. Our work is built on a decentralized federal learning framework, which allows the client to update parameters based on local parameters as well as those of its neighbors. Assume that the interaction topology of the network is constructed as a directed graph $G=(C, E)$, where $C=\{1, 2, \dots , K\}$ are nodes in the graph and $E$ is a connected edge set between nodes. The neighbor set of client $k$ is denoted as $N_{\overline{k}}$, with cardinality $\lvert N_{\bar{k}}\rvert$. Notice that we include client $k$ in the $M_k$ while ${N_{\overline{k}}}= M_k\backslash \{k\}$ does not. The training dataset $\mathcal{D}={\left \{ x_i, y_i \right \} }_{i=1}^{ \lvert \mathcal{D} \rvert }$ distributed on $K$ clients, where $\lvert \mathcal{D}\rvert$ is the total number of training samples. $x_i$ is the i-th sample, $y_i$ is the corresponding label. Dataset on every client $k,k=1\dots K$ is denoted by $\left \{ \mathcal{D}_k\right \}_{k=1}^K$, where  $ {\textstyle \bigcup_{k=1}^{K}\mathcal{D}_k}=\mathcal{D}$. Then the loss function of the i-th sample can be denoted by $f_i(x_i,y_i;w)$, where $w$ is the model parameters. Function $F_k(w)$ is the loss function on client $k$. In Decentralized Federated Learning, every client trains the global model by obtaining the parameters of neighbors instead of data. We consider the following optimization problem:

\begin{equation}\label{fl obective function}
   \min \sum_{k=1}^{K}\frac{ {\textstyle \lvert \mathcal{D}_k \rvert } }{\lvert \mathcal{D} \rvert }F_k(w),\quad{\tiny }  F_k(w)=\frac{ {\textstyle \sum_{i\in\mathcal{D}_k}^{}f_i(x_i,y_i;w)} }{\lvert \mathcal{D}_k \rvert}.
\end{equation}

Among the state of the art, CFA\cite{26} provides a typical paradigm for decentralized federated learning. Each client trains its local model and sends the local model parameters to its neighbor as an update. Each client computes the weighted average of the received updates, and trains with parameters updated for the next iteration. After several rounds of model updating and aggregation, a converged global model for every client is obtained. 

For the $t_{th}$ iteration, We assume that $w_k^t$ is the parameters vector of the local model in client $k$. After receiving the parameters from neighbors, the client obtains the aggregated model
\begin{equation}
  \psi_{k}^{t}=w_{k}^{t}+\epsilon_{t} \sum_{k=1}^{K} \frac{\mathcal{D}_{k}}{\mathcal{D}} (w_{k}^{t}-w_{p\left\{p \in N_{\bar{k}}\right\}}^{t}).
\end{equation}
The local parameter vector $w_k^t$ in CFA is updated as 
\begin{equation}
  w_k^{t+1}\leftarrow \psi_k^t-\eta\nabla F_k(w_k^t),
\end{equation}
where $\eta$ is learning rate.

\section{Exploring The Impact of Delayed Updates}

In this section, we analyze the convergence of delayed updates for decentralized federated learning.  Specifically, we formulate this problem \ref{fl obective function} into a linear constraint problem as follows
\begin{equation}\label{lincon}
    \begin{aligned}
        \min &\sum_{k=1}^{K}\frac{\lvert\mathcal{D}_{k}\rvert }{\lvert \mathcal{D}\rvert}F_{k}(w_k) \\
    \text { s.t. } &  w_{k}=w, \forall k=1, \cdots, K, && w \in \mathcal{W}. 
    \end{aligned}
\end{equation}
Here, $F_k(\cdot)$’s  are a set of loss functions defined in Eq.\ref{fl obective function}, $w$ is a global parameter for aggregating the parameters of neighbor nodes, and $w_k$ is a parameter of client $k$. We define a dual variable of $w$ 
that distribute on clients and update as follows
\begin{equation}\label{dv}
    \lambda_{k}^{t+1}=\lambda_{k}^{t}+\eta_{k}\left(w_{k}^{t+1}-w^{t+1}\right), \forall k \text {. }
\end{equation}

Then, transform the linear constraint problem into an unconstrained augmented Lagrange function
\begin{equation}\label{alf}
    L\left(\left\{w_{k}\right\}, w ; \lambda\right)=  \sum_{k=1}^{K}\frac{\lvert\mathcal{D}_{k}\rvert }{\lvert \mathcal{D}\rvert}F_{k}(w_k)+\sum_{k=1}^{K}\left\langle \lambda_{k}, w_{k}-w\right\rangle
 +\sum_{k=1}^{K} \frac{\eta_{k}}{2}\left\|w_{k}-w\right\|^{2},
\end{equation}
where $\eta_{k}>0$ is some constant. 

To analyze the convergence of Eq.\ref{alf}, we make the following assumptions. First, we assume that the gradient of loss function $\nabla g_k(\cdot)$ for each client $k$ conforms to the Lipschitz continuity.
\begin{assumption}\label{assump1}
Lipschitz continuity. There exists a positive constant $M_k$ such that for any client $k$, $x,y\in \mathcal{W}$, $\left\|\nabla F_{k}(x)-\nabla F_{k}(y)\right\| \leq M_k\|x-y\|$.
\end{assumption}
Next, we make an assumption about the upper bound on the latency of asynchronous communication between clients.
\begin{assumption}\label{assump2} 
There exists a finite constant $T_k$ such that $k$, $t-t(k)\leq T_k$, where $t(k)$ is the index of parameter from neighbors which is used by the client $k$.
\end{assumption}

According to the above two assumptions, we have results as follows.

\begin{lemma}\label{lemma1}
The update of the dual variable $\lambda_k$ satisfies
\begin{equation}
    \left\|\lambda_{k}^{t+1} - \lambda_{k}^{t}\right\|^{2} \leq M_{k}^{2}\left(T_{k}+1\right) \sum_{m=0}^{T_{k}}\left\|w^{t+1-m}-w^{t-m}\right\|^{2},
\end{equation}
where $M_k$ is Lipschitz constant in Assumption \ref{assump1}, $T_k$ is defined in Assumption \ref{assump2} and $\lambda_k$.
\end{lemma}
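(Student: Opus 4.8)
The plan is to reduce everything to the first-order optimality condition for the primal sub-update together with the two assumptions. First I would write the stationarity condition obtained by minimizing the augmented Lagrangian $L(\{w_k\},w;\lambda)$ of Eq.~\eqref{alf} over the block $w_k$, with the dual frozen at $\lambda_k^t$ and the global iterate frozen at the stale copy $w^{t(k)}$ actually available to client $k$; this reads $\frac{|\mathcal{D}_k|}{|\mathcal{D}|}\nabla F_k(w_k^{t+1})+\lambda_k^t+\eta_k\bigl(w_k^{t+1}-w^{t(k)}\bigr)=0$. Combining this with the dual recursion Eq.~\eqref{dv} cancels the penalty term and produces the ADMM-type identity $\lambda_k^{t+1}=-\frac{|\mathcal{D}_k|}{|\mathcal{D}|}\nabla F_k(w_k^{t+1})$; writing the same identity at step $t$ and subtracting gives $\lambda_k^{t+1}-\lambda_k^t=-\frac{|\mathcal{D}_k|}{|\mathcal{D}|}\bigl(\nabla F_k(w_k^{t+1})-\nabla F_k(w_k^t)\bigr)$. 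The hard part will be pinning down this step cleanly, since it depends on exactly which (delayed or current) copies of $w$ enter the sub-problem versus Eq.~\eqref{dv}; I would settle that bookkeeping before doing anything else.

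Next I would invoke Assumption~\ref{assump1}: since $\|\nabla F_k(x)-\nabla F_k(y)\|\le M_k\|x-y\|$ and the weight $|\mathcal{D}_k|/|\mathcal{D}|\le 1$, the identity above gives $\|\lambda_k^{t+1}-\lambda_k^t\|\le M_k\|w_k^{t+1}-w_k^t\|$, hence $\|\lambda_k^{t+1}-\lambda_k^t\|^2\le M_k^2\|w_k^{t+1}-w_k^t\|^2$. So it remains only to bound $\|w_k^{t+1}-w_k^t\|^2$ by the delayed global increments. For this I would use Assumption~\ref{assump2}: each local iterate is built from a stale global parameter, so $w_k^{t+1}=w^{a}$ and $w_k^t=w^{b}$ with $t+1-T_k\le a\le t+1$ and $t-T_k\le b\le t$; assuming $a\ge b$ (the norm is symmetric), $w_k^{t+1}-w_k^t=\sum_{j=b}^{a-1}(w^{j+1}-w^j)$, and reindexing by $m=t-j$ rewrites each summand as $w^{t+1-m}-w^{t-m}$ with $m$ running over a subset of $\{0,1,\dots,T_k\}$ of size $a-b\le T_k+1$.

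Finally I would apply the elementary bound $\|\sum_{i=1}^{n}a_i\|^2\le n\sum_{i=1}^{n}\|a_i\|^2$ (Cauchy--Schwarz) to this telescoping sum, then enlarge the index range back to all $m\in\{0,\dots,T_k\}$ at the cost only of adding nonnegative terms, obtaining $\|w_k^{t+1}-w_k^t\|^2\le(T_k+1)\sum_{m=0}^{T_k}\|w^{t+1-m}-w^{t-m}\|^2$. Chaining this with the inequality from the previous paragraph yields $\|\lambda_k^{t+1}-\lambda_k^t\|^2\le M_k^2(T_k+1)\sum_{m=0}^{T_k}\|w^{t+1-m}-w^{t-m}\|^2$, which is exactly the claim. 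Everything past the first step is routine once the telescoping window is recognized to contain at most $T_k+1$ increments of $w$; the only real obstacle is making the stale-index accounting in the stationarity/dual identity airtight.
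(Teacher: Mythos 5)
Your overall skeleton --- stationarity of the primal subproblem plus the dual recursion Eq.\ref{dv} gives $\lambda_k^{t+1}=-\nabla F_k(\cdot)$, then Lipschitz continuity, a telescoping sum of at most $T_k+1$ global increments, and Cauchy--Schwarz --- is the same as the paper's. The difference, and it is fatal, is \emph{where} the gradient is evaluated. In the paper the local update is a linearized proximal step whose optimality condition reads $\nabla F_k\bigl(w^{[t+1](k)}\bigr)+\lambda_k^t+\eta_k\bigl(w_k^{t+1}-w^{t+1}\bigr)=0$: the gradient sits at the \emph{delayed global} iterate $w^{[t+1](k)}$ and the penalty is anchored at the current $w^{t+1}$. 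Combined with Eq.\ref{dv} this yields $\lambda_k^{t+1}=-\nabla F_k\bigl(w^{[t+1](k)}\bigr)$, so the dual increment is a difference of gradients at two delayed \emph{global} iterates $w^{[t+1](k)}$ and $w^{[t](k)}$; Assumption \ref{assump2} bounds the gap between these two indices by $T_k+1$, and the telescoping over $w^{j+1}-w^j$ lands exactly on the terms $\|w^{t+1-m}-w^{t-m}\|$, $m=0,\dots,T_k$, appearing in the lemma.

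Your version instead evaluates the gradient at the local iterate $w_k^{t+1}$ and anchors the penalty at a stale $w^{t(k)}$, which (after cancellation) gives $\lambda_k^{t+1}-\lambda_k^t$ as a difference of gradients at \emph{local} iterates, hence only the bound $\|\lambda_k^{t+1}-\lambda_k^t\|\le M_k\|w_k^{t+1}-w_k^t\|$. To bridge to the lemma's right-hand side you then assert $w_k^{t+1}=w^{a}$ and $w_k^{t}=w^{b}$ for stale indices $a,b$. That identification has no support in the model: $w_k^{t+1}$ is the minimizer of a strongly convex proximal subproblem (explicitly, $w_k^{t+1}=w^{t+1}-\frac{1}{\eta_k}\bigl(\nabla F_k(w^{[t+1](k)})+\lambda_k^t\bigr)$ in the paper's setup), not a copy of any delayed global parameter, and Assumption \ref{assump2} only constrains the staleness of the index $t(k)$ of the global parameter that client $k$ reads --- it says nothing equating local iterates with global ones. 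So the final chaining step fails; this is precisely the ``stale-index accounting'' you flagged as the obstacle, and in your formulation it cannot be made airtight. The fix is to adopt the paper's form of the optimality condition so that the delay enters through the gradient argument rather than through a fictitious equality of iterates.
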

\begin{proof}
In the iterative process of the client, the following is true
    \begin{equation}
        \nabla F_{k}\left(w^{[t+1](k)}\right)+\lambda_{k}^{t}+\eta_{k}\left(w_{k}^{t+1}-w^{t+1}\right)=0
    \end{equation}
Equivalently, combined with Eq.\ref{dv} there is
\begin{equation}\label{eq12}
    \nabla F_{k}\left(w^{[t](k)}\right)=-\lambda_{k}^{t}
\end{equation}
If the latest parameter $w_k$ of client $k$ are not updated in the next iteration $t+1$, we can obtain
\begin{equation}
    \left\|w_{k}^{t+1}-w_{k}^{t}\right\|^{2}=0
\end{equation}
Conversely, if the latest parameter $w_k$ of client $k$ are updated in the next iteration $t+1$, we can obtain
\begin{equation}
    \begin{array}{l}
\left\{\begin{array}{ll}
\lvert [t+1](k)-[t](k)\rvert\leq t+1-[t](k) \leq T_{k}+1, & \text { if }[t+1](k) \geq[t](k), \\
\lvert[t+1](k)-[t](k)\rvert\leq t+1-[t+1](k) \leq T_{k}, & \text { otherwise }.
\end{array}\right.
\end{array}
\end{equation}
Considering the above two cases, we have
\begin{equation}\label{l1}
    \begin{aligned}
\left\|\lambda_{k}^{t+1}-\lambda_{k}^{t}\right\| & =\left\|\nabla F_{k}\left(w^{[t+1](k)}\right)-\nabla F_{k}\left(w^{[t](k)}\right)\right\| \\
& \leq M_{k} \sum_{m=0}^{T_{k}}\left\|w^{t+1-m}-w^{t-m}\right\|
\end{aligned}
\end{equation}
Eq.\ref{l1} implies that the result is obtained.
\end{proof}

Next, in order to upper bound the augmented Lagrangian, we define auxiliary functions as follows
\begin{equation}
\begin{array}{l}
p_{k}\left(w_{k} ; w^{t+1}, \lambda^{t}\right)=F_{k}\left(w_{k}\right)+\left\langle \lambda_{k}^{t}, w_{k}-w^{t+1}\right\rangle+\frac{\eta_{k}}{2}\left\|w_{k}-w^{t+1}\right\|^{2} \\
q_{k}\left(w_{k} ; w^{t+1}, \lambda^{t}\right)=F_{k}\left(w^{t+1}\right)+\left\langle\nabla F_{k}\left(w^{t+1}\right), w_{k}-w^{t+1}\right\rangle \\ 
\quad \quad \quad \quad \quad \quad \quad \quad + \left\langle \lambda_{k}^{t}, w_{k}-w^{t+1}\right\rangle+\frac{\eta_{k}}{2}\left\|w_{k}-w^{t+1}\right\|^{2} \\
\bar{q}_{k}\left(w_{k} ; w^{t+1}, \lambda^{t}\right)=F_{k}\left(w^{t+1}\right)+\left\langle\nabla F_{k}\left(w^{[t+1](k)}\right), w_{k}-w^{t+1}\right\rangle \\ 
\quad \quad \quad \quad \quad \quad  \quad \quad+ \left\langle \lambda_{k}^{t}, w_{k}-w^{t+1}\right\rangle+\frac{\eta_{k}}{2}\left\|w_{k}-w^{t+1}\right\|^{2} .
\end{array}
\end{equation}

According to assumptions and Lemma \ref{lemma1}, we can obtain the properties of the function $p_k(\cdot)$ as follows.
\begin{lemma}\label{lemma2}
\begin{equation}
  \begin{array}{l}
p_{k}\left(w_{k}^{t+1} ; w^{t+1}, \lambda^{t}\right)-p_{k}\left(w_{k}^{t} ; w^{t+1}, \lambda^{t}\right) \\
\leq-\left(\frac{\eta_{k}}{2}-\frac{7}{2} M_{k}\right)\left\|w_{k}^{t}-w_{k}^{t+1}\right\|^{2}+\frac{M_{k} T_{k}}{2} \sum_{i=0}^{T_{k}-1}\left\|w^{t+1-i}-w^{t-i}\right\|^{2} \\
\quad+\frac{7 M_{k}}{2 \eta_{k}^{2}}\left\|\lambda_{k}^{t+1}-\lambda_{k}^{t}\right\|^{2}, \quad k=1, \cdots, K .
\end{array}  
\end{equation}
\end{lemma}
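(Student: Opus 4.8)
The plan is to control $p_{k}(w_{k}^{t+1};w^{t+1},\lambda^{t})-p_{k}(w_{k}^{t};w^{t+1},\lambda^{t})$ by comparing $p_{k}$ with the two auxiliary quadratics $q_{k}$ and $\bar{q}_{k}$ evaluated at the iterates $w_{k}^{t}$ and $w_{k}^{t+1}$. The cornerstone is that $w_{k}^{t+1}$ is the \emph{exact minimizer} of $w_{k}\mapsto\bar{q}_{k}(w_{k};w^{t+1},\lambda^{t})$: the stationarity relation $\nabla F_{k}(w^{[t+1](k)})+\lambda_{k}^{t}+\eta_{k}(w_{k}^{t+1}-w^{t+1})=0$ already used in the proof of Lemma~\ref{lemma1} is precisely $\nabla_{w_{k}}\bar{q}_{k}(w_{k}^{t+1};w^{t+1},\lambda^{t})=0$, and $\bar{q}_{k}(\cdot\,;w^{t+1},\lambda^{t})$ is an $\eta_{k}$-strongly convex quadratic in $w_{k}$. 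Hence $\bar{q}_{k}(w_{k}^{t+1};w^{t+1},\lambda^{t})-\bar{q}_{k}(w_{k}^{t};w^{t+1},\lambda^{t})\le-\tfrac{\eta_{k}}{2}\|w_{k}^{t}-w_{k}^{t+1}\|^{2}$, which supplies the negative quadratic term in the statement.

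Next I would estimate the ``model mismatch'' among $p_{k}$, $q_{k}$, and $\bar{q}_{k}$. By Assumption~\ref{assump1} the map $\nabla F_{k}$ is $M_{k}$-Lipschitz, so the descent-lemma two-sided bound $\bigl|F_{k}(u)-F_{k}(v)-\langle\nabla F_{k}(v),u-v\rangle\bigr|\le\tfrac{M_{k}}{2}\|u-v\|^{2}$ gives $\bigl|p_{k}(w_{k})-q_{k}(w_{k})\bigr|\le\tfrac{M_{k}}{2}\|w_{k}-w^{t+1}\|^{2}$ for $w_{k}\in\{w_{k}^{t},w_{k}^{t+1}\}$. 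Moreover $q_{k}(w_{k})-\bar{q}_{k}(w_{k})=\langle\nabla F_{k}(w^{t+1})-\nabla F_{k}(w^{[t+1](k)}),\,w_{k}-w^{t+1}\rangle$, whose magnitude is at most $M_{k}\|w^{t+1}-w^{[t+1](k)}\|\cdot\|w_{k}-w^{t+1}\|$, again by Assumption~\ref{assump1}; and Assumption~\ref{assump2} yields $t+1-[t+1](k)\le T_{k}$, so expressing $w^{t+1}-w^{[t+1](k)}$ as a telescoping sum of at most $T_{k}$ consecutive increments and using Cauchy--Schwarz gives $\|w^{t+1}-w^{[t+1](k)}\|^{2}\le T_{k}\sum_{i=0}^{T_{k}-1}\|w^{t+1-i}-w^{t-i}\|^{2}$.

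I would then assemble the telescoping identity
\begin{equation*}
\begin{aligned}
p_{k}(w_{k}^{t+1})-p_{k}(w_{k}^{t})
&=\bigl[p_{k}(w_{k}^{t+1})-q_{k}(w_{k}^{t+1})\bigr]+\bigl[q_{k}(w_{k}^{t+1})-\bar{q}_{k}(w_{k}^{t+1})\bigr]+\bigl[\bar{q}_{k}(w_{k}^{t+1})-\bar{q}_{k}(w_{k}^{t})\bigr]\\
&\quad+\bigl[\bar{q}_{k}(w_{k}^{t})-q_{k}(w_{k}^{t})\bigr]+\bigl[q_{k}(w_{k}^{t})-p_{k}(w_{k}^{t})\bigr],
\end{aligned}
\end{equation*}
bound the middle bracket by strong convexity, the other four by the mismatch estimates above, and split each cross term $M_{k}\|w^{t+1}-w^{[t+1](k)}\|\cdot\|w_{k}-w^{t+1}\|$ through Young's inequality into a ``delay'' part, which aggregates into $\tfrac{M_{k}T_{k}}{2}\sum_{i=0}^{T_{k}-1}\|w^{t+1-i}-w^{t-i}\|^{2}$, and a multiple of $\|w_{k}-w^{t+1}\|^{2}$. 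Finally I would eliminate every remaining $\|w_{k}^{t+1}-w^{t+1}\|$ and $\|w_{k}^{t}-w^{t+1}\|$ using the dual-update identity $w_{k}^{t+1}-w^{t+1}=\eta_{k}^{-1}(\lambda_{k}^{t+1}-\lambda_{k}^{t})$ read off from Eq.~\ref{dv} together with $\|w_{k}^{t}-w^{t+1}\|^{2}\le 2\|w_{k}^{t}-w_{k}^{t+1}\|^{2}+2\eta_{k}^{-2}\|\lambda_{k}^{t+1}-\lambda_{k}^{t}\|^{2}$, and then collect like terms.

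The main obstacle is one of constant bookkeeping rather than ideas: the Young's-inequality weights chosen in the cross-term splittings and in the final substitution for $\|w_{k}^{t}-w^{t+1}\|$ must be tuned so that all the $\|w_{k}^{t}-w_{k}^{t+1}\|^{2}$ contributions collapse to exactly the coefficient $\tfrac{\eta_{k}}{2}-\tfrac{7}{2}M_{k}$ and all the $\|\lambda_{k}^{t+1}-\lambda_{k}^{t}\|^{2}$ contributions to exactly $\tfrac{7M_{k}}{2\eta_{k}^{2}}$; the appearance of the constant $7$ is the signature of several $\tfrac{M_{k}}{2}\|w_{k}-w^{t+1}\|^{2}$-type pieces being combined. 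A secondary point to get right is that the telescoping bound on $\|w^{t+1}-w^{[t+1](k)}\|$ must use the correct number of increments (at most $T_{k}$), which is exactly why the resulting sum runs from $i=0$ to $T_{k}-1$.
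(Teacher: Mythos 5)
Your proposal follows essentially the same route as the paper's proof: it rests on $w_k^{t+1}$ being the exact minimizer of the $\eta_k$-strongly convex surrogate $\bar q_k$, controls the mismatches among $p_k$, $q_k$, $\bar q_k$ via the $M_k$-Lipschitz gradient and the telescoped delay bound $\|w^{t+1}-w^{[t+1](k)}\|^2 \le T_k\sum_{i=0}^{T_k-1}\|w^{t+1-i}-w^{t-i}\|^2$, and converts $\|w_k^{t+1}-w^{t+1}\|$ into $\eta_k^{-1}\|\lambda_k^{t+1}-\lambda_k^t\|$ through the dual update. Your five-bracket telescoping decomposition is algebraically equivalent to the paper's chain through $q_k(w_k^{t+1})-q_k(w_k^t)$, so the only remaining work is the same constant bookkeeping the paper performs to arrive at $\tfrac{\eta_k}{2}-\tfrac{7}{2}M_k$ and $\tfrac{7M_k}{2\eta_k^2}$.
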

\begin{proof}
    From function $p_k(\cdot)$ and $q_k(\cdot)$ we defined above, we have
    \begin{equation}
        p_{k}\left(w_{k} ; w^{t+1}, \lambda^{t}\right) \leq q_{k}\left(w_{k} ; w^{t+1}, \lambda^{t}\right)+\frac{M_{k}}{2}\left\|w_{k}-w^{t+1}\right\|^{2}, \forall w_{k}, k=1, \cdots, K.
    \end{equation}
    Using function $\bar{q}_{k}(\cdot)$ we defined, we can obtain
    \begin{equation}
        w_{k}^{t+1}=\arg \min _{w_{k}} \bar{q}_{k}\left(w_{k} ; w^{t+1}, \lambda^{t}\right).
    \end{equation}
Due to the strong convexity of $\bar{q}_{k}\left(w_{k} ; w^{t+1}, \lambda^{t}\right)$ with respect to $w_k$, we have
\begin{equation}\label{eq3}
    \begin{array}{l}
\bar{q}_{k}\left(w_{k}^{t+1} ; w^{t+1}, \lambda^{t}\right)-\bar{q}_{k}\left(w_{k}^{t} ; w^{t+1}, \lambda^{t}\right) \\
\leq-\frac{\eta_{k}}{2}\left\|w_{k}^{t}-w_{k}^{t+1}\right\|^{2}, \forall k,
\nabla \bar{q}_{k}\left(w_{k}^{t+1} ; w^{t+1}, \lambda^{t}\right)=0.
\end{array}
\end{equation}
Due to the strong convexity of $q_{k}\left(w_{k} ; w^{t+1}, \lambda^{t}\right)$, we also have
\begin{equation}\label{eq2}
    \begin{array}{l}
q_{k}\left(w_{k}^{t+1} ; w^{t+1}, \lambda^{t}\right) \\ \leq  q_{k}\left(w_{k}^{t} ; w^{t+1}, \lambda^{t}\right)+\left\langle\nabla q_{k}\left(w_{k}^{t+1} ; w^{t+1}, \lambda^{t}\right), w_{k}^{t+1}-w_{k}^{t}\right\rangle-\frac{\eta_{k}}{2}\left\|w_{k}^{t+1}-w_{k}^{t}\right\|^{2} \\
\leq  q_{k}\left(w_{k}^{t} ; w^{t+1}, \lambda^{t}\right)+M_{k}\left\|w^{[t+1](k)}-w^{t+1}\right\|\left\|w_{k}^{t+1}-w_{k}^{t}\right\|-\frac{\eta_{k}}{2}\left\|w_{k}^{t+1}-w_{k}^{t}\right\|^{2} \\
\leq  q_{k}\left(w_{k}^{t} ; w^{t+1}, \lambda^{t}\right)+\frac{M_{k} T_{k}}{2} \sum_{i=0}^{T_{k}-1}\left\|w^{t+1-i}-w^{t-i}\right\|^{2}-\frac{\eta_{k}-M_{k}}{2}\left\|w_{k}^{t+1}-w_{k}^{t}\right\|^{2}.
\end{array}
\end{equation}
In addition, there are series of inequalities about function $p_k(\cdot)$ and $q_k(\cdot)$
\begin{equation}\label{eq1}
    \begin{array}{l}
q_{k}\left(w_{k}^{t} ; w^{t+1}, \lambda^{t}\right)-p_{k}\left(w_{k}^{t} ; w^{t+1}, \lambda^{t}\right) \\
=F_{k}\left(w^{t+1}\right)+\left\langle\nabla F_{k}\left(w^{t+1}\right), w_{k}^{t}-w^{t+1}\right\rangle 
+\left\langle \lambda_{k}^{t}, w_{k}^{t}-w^{t+1}\right\rangle+\frac{\eta_{k}}{2}\left\|w_{k}^{t}-w^{t+1}\right\|^{2} \\
\quad \quad-\left(F_{k}\left(w_{k}^{t}\right)+\left\langle \lambda_{k}^{t}, w_{k}^{t}-w^{t+1}\right\rangle+\frac{\eta_{k}}{2}\left\|w_{k}^{t}-w^{t+1}\right\|^{2}\right) \\
\leq\left\langle\nabla F_{k}\left(w^{t+1}\right)-\nabla F_{k}\left(w_{k}^{t}\right), w_{k}^{t}-w^{t+1}\right\rangle+\frac{M_{k}}{2}\left\|w_{k}^{t}-w^{t+1}\right\|^{2} \\
\leq 3 M_{k}\left(\left\|w_{k}^{t}-w_{k}^{t+1}\right\|^{2}+\left\|w_{k}^{t+1}-w^{t+1}\right\|^{2}\right).
\end{array}
\end{equation}
Combined Eq.\ref{eq3}, \ref{eq2}, \ref{eq1} and Assumption \ref{assump1}, we can obtain

\begin{equation}
    \begin{aligned}
& p_k\left(w_k^{t+1} ; w^{t+1}, \lambda^t\right)-p_k\left(w_k^t ; w^{t+1}, \lambda^t\right) \\
& \leq q_k\left(w_k^{t+1} ; w^{t+1}, \lambda^t\right)-q_k\left(w_k^t ; w^{t+1}, \lambda^t\right)+\frac{M_k}{2}\left\|w^{t+1}-w_k^{t+1}\right\|^2 \\
& \quad+q_k\left(w_k^t ; w^{t+1}, \lambda^t\right)-p_k\left(w_k^t ; w^{t+1}, \lambda^t\right) \\
& \leq q_k\left(w_k^{t+1} ; w^{t+1}, \lambda^t\right)-q_k\left(w_k^t ; w^{t+1}, \lambda^t\right)+\frac{M_k}{2}\left\|w^{t+1}-w_k^{t+1}\right\|^2 \\
& + 3 M_{k}\left(\left\|w_{k}^{t}-w_{k}^{t+1}\right\|^{2}+\left\|w_{k}^{t+1}-w^{t+1}\right\|^{2}\right)\\
& \leq q_k\left(w_k^{t+1} ; w^{t+1}, \lambda^t\right)-q_k\left(w_k^t ; w^{t+1}, \lambda^t\right)+\frac{7M_k}{2}\left\|w^{t+1}-w_k^{t+1}\right\|^2 \\
& + 3 M_{k}\left\|w_{k}^{t}-w_{k}^{t+1}\right\|^{2}\\
& \leq \frac{M_{k} T_{k}}{2} \sum_{i=0}^{T_{k}-1}\left\|w^{t+1-i}-w^{t-i}\right\|^{2}-\frac{\eta_{k}-M_{k}}{2}\left\|w_{k}^{t+1}-w_{k}^{t}\right\|^{2}\\
& +\frac{7M_k}{2}\left\|w^{t+1}-w_k^{t+1}\right\|^2 + 3 M_{k}\left\|w_{k}^{t}-w_{k}^{t+1}\right\|^{2}\\
& \leq-\frac{\eta_k-M_k}{2}\left\|w_k^t-w_k^{t+1}\right\|^2+\frac{M_k T_k}{2} \sum_{i=0}^{T_k-1}\left\|w^{t+1-i}-w^{t-i}\right\|^2 \\
& \quad+\frac{7 M_k}{2 \eta_k^2}\left\|\lambda_k^{t+1}-\lambda_k^t\right\|^2+3 M_k\left\|w_k^t-w_k^{t+1}\right\|^2 .
\end{aligned}
\end{equation}

The desired result then follows.

\end{proof}

Using function $p_k(\cdot)$ we defined above, we have
\begin{equation}
    L\left(\left\{w_{k}^{t+1}\right\}, w^{t+1} ; \lambda^{t}\right)=\sum_{k=1}^{K} p_{k}\left(w_{k}^{t+1} ; w^{t+1}, \lambda^{t}\right).
\end{equation}
In the same setting as Lemma \ref{lemma2}, we have
\begin{lemma}\label{lemma3}
The augmented Lagrangian function satisfies the following properties in each round of iterations
    \begin{equation}
        \begin{array}{l}
L\left(\left\{w_{k}^{t+1}\right\}, w^{t+1} ; \lambda^{t+1}\right)-L\left(\left\{w_{k}^{1}\right\}, w^{1} ; \lambda^{1}\right) \\
\leq-\sum_{i=1}^{t} \sum_{k=1}^{K}\left(\frac{\eta_{k}-7 M_{k}}{2}\right)\left\|w_{k}^{i+1}-w_{k}^{i}\right\|^{2} \\
-\sum_{i=1}^{t} \sum_{k=1}^{K} \left(\eta_{k}-2\left(\frac{1}{\eta_{k}}-\frac{7 M_{k}}{2 \eta_{k}^{2}}\right) M_{k}^{2}\left(T_{k}+1\right)^{2}+M_{k} T_{k}^{2}\right)\left\|w^{i+1}-w^{i}\right\|^{2}.
\end{array}
    \end{equation}
\end{lemma}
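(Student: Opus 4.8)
The plan is to telescope the one--step change of the augmented Lagrangian and to control each piece with Lemmas \ref{lemma1} and \ref{lemma2}. First I would write
$$L\left(\{w_k^{t+1}\},w^{t+1};\lambda^{t+1}\right) - L\left(\{w_k^{1}\},w^{1};\lambda^{1}\right) = \sum_{i=1}^{t}\left(L\left(\{w_k^{i+1}\},w^{i+1};\lambda^{i+1}\right) - L\left(\{w_k^{i}\},w^{i};\lambda^{i}\right)\right),$$
and split each summand according to the order in which the scheme updates the blocks --- first the aggregate variable $w$, then the local variables $w_k$, then the dual variables $\lambda_k$:
$$L\left(\{w_k^{i+1}\},w^{i+1};\lambda^{i+1}\right) - L\left(\{w_k^{i}\},w^{i};\lambda^{i}\right) = A_i + B_i + C_i,$$
where $A_i = L(\{w_k^{i}\},w^{i+1};\lambda^{i}) - L(\{w_k^{i}\},w^{i};\lambda^{i})$ is the aggregate step, $B_i = \sum_{k}\left(p_k(w_k^{i+1};w^{i+1},\lambda^{i}) - p_k(w_k^{i};w^{i+1},\lambda^{i})\right)$ is the local primal step (using the representation $L(\{w_k\},w;\lambda)=\sum_k p_k(w_k;w,\lambda)$ recorded just before the lemma), and $C_i = L(\{w_k^{i+1}\},w^{i+1};\lambda^{i+1}) - L(\{w_k^{i+1}\},w^{i+1};\lambda^{i})$ is the dual step.

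Next I would bound the three pieces. For $C_i$, the Lagrangian is affine in $\lambda$, so $C_i = \sum_k \langle \lambda_k^{i+1}-\lambda_k^{i},\, w_k^{i+1}-w^{i+1}\rangle$; substituting the dual rule Eq.~\ref{dv}, i.e.\ $\lambda_k^{i+1}-\lambda_k^{i}=\eta_k(w_k^{i+1}-w^{i+1})$, turns this into the exact identity $C_i = \sum_k \eta_k^{-1}\|\lambda_k^{i+1}-\lambda_k^{i}\|^2$. For $B_i$, Lemma \ref{lemma2} applies term by term, contributing $-(\eta_k/2 - 7M_k/2)\|w_k^{i}-w_k^{i+1}\|^2$, the window sum $\tfrac{M_kT_k}{2}\sum_{j=0}^{T_k-1}\|w^{i+1-j}-w^{i-j}\|^2$, and the dual term $\tfrac{7M_k}{2\eta_k^2}\|\lambda_k^{i+1}-\lambda_k^{i}\|^2$. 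For $A_i$, I would use that $w\mapsto L(\{w_k^{i}\},w;\lambda^{i})$ is $\big(\sum_k \eta_k\big)$--strongly convex and that $w^{i+1}$ is its minimizer, giving $A_i \le -\tfrac12\big(\sum_k\eta_k\big)\|w^{i+1}-w^{i}\|^2$. Finally, in both $C_i$ and the dual term of $B_i$, I would invoke Lemma \ref{lemma1} to replace $\|\lambda_k^{i+1}-\lambda_k^{i}\|^2$ by $M_k^2(T_k+1)\sum_{m=0}^{T_k}\|w^{i+1-m}-w^{i-m}\|^2$, after which the one--step bound involves only $\|w_k^{i+1}-w_k^{i}\|^2$ and sliding--window sums of $\|w^{j+1}-w^{j}\|^2$.

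It then remains to sum over $i=1,\dots,t$ and carry out the index bookkeeping. The key counting fact is that in $\sum_{i=1}^{t}\sum_{m=0}^{T_k}\|w^{i+1-m}-w^{i-m}\|^2$ each consecutive difference $\|w^{j+1}-w^{j}\|^2$ is counted at most $T_k+1$ times, while in $\sum_{i=1}^{t}\sum_{j=0}^{T_k-1}\|w^{i+1-j}-w^{i-j}\|^2$ it is counted at most $T_k$ times (with the convention $w^{j}\equiv w^{1}$ for $j\le 1$, so the boundary windows contribute nothing). Hence these double sums are bounded by $(T_k+1)^2$, respectively $T_k^2$, times $\sum_{i=1}^{t}\|w^{i+1}-w^{i}\|^2$, and collecting the coefficients of $\|w_k^{i+1}-w_k^{i}\|^2$ and of $\|w^{i+1}-w^{i}\|^2$ produces the stated inequality.

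The hardest part is this last bookkeeping step: tracking exactly how many times each consecutive difference $\|w^{j+1}-w^{j}\|^2$ recurs across the overlapping delay windows contributed by Lemmas \ref{lemma1} and \ref{lemma2}, and handling the small--$i$ boundary terms so that nothing is over counted. A secondary point one must check is that the three--way split in the first step genuinely matches the update order of the decentralized scheme and that the aggregate iterate $w^{i+1}$ is the exact minimizer of $L(\{w_k^{i}\},\cdot;\lambda^{i})$, so that the strong--convexity estimate for $A_i$ is valid; otherwise an additional inexactness term would appear.
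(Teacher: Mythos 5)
Your proposal follows essentially the same route as the paper: the same decomposition of the one-step change into a dual step (evaluated exactly via the update rule as $\sum_k \eta_k^{-1}\|\lambda_k^{i+1}-\lambda_k^{i}\|^2$), a local-primal step controlled by Lemma~\ref{lemma2}, and an aggregate step controlled by strong convexity of $L$ in $w$ at its minimizer $w^{i+1}$, followed by Lemma~\ref{lemma1} and the same $(T_k+1)^2$ / $T_k^2$ window-counting when telescoping. The only caveat is that honest bookkeeping yields the coefficient $\frac{\eta_k}{2}-\left(\frac{1}{\eta_k}+\frac{7M_k}{2\eta_k^2}\right)M_k^2(T_k+1)^2-\frac{M_kT_k^2}{2}$ rather than the constant printed in the lemma statement; that mismatch already appears in the paper's own final line (its concluding ``$:=$'' does not preserve the signs and factor of two), so your derivation reproduces the paper's actual argument.
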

\begin{proof}
    
    First, we restrict the continuous difference $L\left(\left\{w_{k}^{t+1}\right\}, w^{t+1} ; \lambda^{t+1}\right)-L\left(\left\{w_{k}^{1}\right\}, w^{1} ; \lambda^{1}\right)$.
We first separate the difference into
\begin{equation}\label{eq25}
    \begin{aligned}
& L\left(\left\{w_k^{t+1}\right\}, w^{t+1} ; \lambda^{t+1}\right)-L\left(\left\{w_k^t\right\}, w^t ; \lambda^t\right) \\
& =\left(L\left(\left\{w_k^{t+1}\right\}, w^{t+1} ; \lambda^{t+1}\right)-L\left(\left\{w_k^{t+1}\right\}, w^{t+1} ; \lambda^t\right)\right) \\
& \quad+\left(L\left(\left\{w_k^{t+1}\right\}, w^{t+1} ; \lambda^t\right)-L\left(\left\{w_k^t\right\}, w^t ; \lambda^t\right)\right) .
\end{aligned}
\end{equation}
    
The first term is expressed in Eq.\ref{eq25} as

    \begin{equation}
        \begin{aligned}
& L\left(\left\{w_k^{t+1}\right\}, w^{t+1} ; \lambda^{t+1}\right)-L\left(\left\{w_k^{t+1}\right\}, w^{t+1} ; \lambda^t\right) \\
& =\sum_{k=1}^K \frac{1}{\eta_k}\left\|\lambda_k^{t+1}-\lambda_k^t\right\|^2 .
\end{aligned}
    \end{equation}

To bound the second term in Eq.\ref{eq25}, 

\begin{equation}\label{eq27}
\begin{aligned}
& L\left(\left\{w_k^{t+1}\right\}, w^{t+1} ; \lambda^t\right)-L\left(\left\{w_k^t\right\}, w^t ; \lambda^t\right) \\
& =L\left(\left\{w_k^{t+1}\right\}, w^{t+1} ; \lambda^t\right)-L\left(\left\{w_k^t\right\}, w^{t+1} ; \lambda^t\right)+L\left(\left\{w_k^t\right\}, w^{t+1} ; \lambda^t\right)-L\left(\left\{w_k^t\right\}, w^t ; \lambda^t\right) \\
& =\sum_{k=1}^K\left(p_k\left(w_k^{t+1} ; w^{t+1}, \lambda^t\right)-p_k\left(w_k^t ; w^{t+1}, \lambda^t\right)\right)+L\left(\left\{w_k^t\right\}, w^{t+1} ; \lambda^t\right)-L\left(\left\{w_k^t\right\}, w^t ; \lambda^t\right) \\
& \leq-\sum_{k=1}^K\left[\left(\frac{\eta_k}{2}-\frac{7}{2} M_k\right)\left\|w_k^t-w_k^{t+1}\right\|^2-\frac{M_k T_k}{2} \sum_{i=0}^{T_k-1}\left\|w^{t+1-i}-w^{t-i}\right\|^2\right. \\
& -\frac{7 M_k}{2 \eta_k^2}\left\|\lambda_k^{t+1}-\lambda_k^t\right\|^2-\frac{1}{2} \sum_{k=1}^K \eta_k\left\|w^{t+1}-w^t\right\|^2.
\end{aligned}
\end{equation}

The last inequality in Eq.\ref{eq27} is obtained by Lemma~\ref{lemma2} and the strong convexity of $L({w_t^k}, w; \lambda_t)$ with respect to the variable $w$ at $w = w^{t+ 1}$.

\begin{equation}
    \begin{aligned}
& L\left(\left\{w_k^{t+1}\right\}, w^{t+1} ; \lambda^{t+1}\right)-L\left(\left\{w_k^t\right\}, w^t ; \lambda^t\right) \leq \sum_{k=1}^K\left[-\left(\frac{\eta_k}{2}-\frac{7}{2} M_k\right)\left\|w_k^t-w_k^{t+1}\right\|^2\right. \\
& \left.\quad+\frac{M_k T_k}{2} \sum_{i=0}^{T_k-1}\left\|w^{t+1-i}-w^{t-i}\right\|^2\right]-\frac{1}{2} \sum_{k=1}^K \eta_k\left\|w^{t+1}-w^t\right\|^2 \\
& \quad+\sum_{k=1}^K\left(\frac{1}{\eta_k}+\frac{7 M_k}{2 \eta_k^2}\right)\left(M_k^2\left(T_k+1\right) \sum_{i=0}^{T_k}\left\|w^{t+1-i}-w^{t-i}\right\|^2\right) .
\end{aligned}
\end{equation}

Next, combining the above two inequalities(Eq.\ref{eq25} and Eq.\ref{eq27}) and using Lemma \ref{lemma1}, we get the following inequality:

\begin{equation}\label{eq29}
    \begin{aligned}
& L\left(\left\{w_k^{t+1}\right\}, w^{t+1} ; \lambda^{t+1}\right)-L\left(\left\{w_k^t\right\}, w^t ; \lambda^t\right) \leq \sum_{k=1}^K\left[-\left(\frac{\eta_k}{2}-\frac{7}{2} M_k\right)\left\|w_k^t-w_k^{t+1}\right\|^2\right. \\
& \left.\quad+\frac{M_k T_k}{2} \sum_{i=0}^{T_k-1}\left\|w^{t+1-i}-w^{t-i}\right\|^2\right]-\frac{1}{2} \sum_{k=1}^K \eta_k\left\|w^{t+1}-w^t\right\|^2 \\
& \quad+\sum_{k=1}^K\left(\frac{1}{\eta_k}+\frac{7 M_k}{2 \eta_k^2}\right)\left(M_k^2\left(T_k+1\right) \sum_{i=0}^{T_k}\left\|w^{t+1-i}-w^{t-i}\right\|^2\right) .
\end{aligned}
\end{equation}
Then, for any given $t$, the difference $L({w^{t+1}_k}, w^{t+1}; \lambda ^{t+1})-L({w^1_k}, w^1; \lambda^1)$ is obtained by summing over all iterations Eq.\ref{eq29} acquired:
\begin{equation}
    \begin{aligned}
& L\left(\left\{w^{t+1}\right\}, w^{t+1} ; \lambda^{t+1}\right)-L\left(\left\{w_k^1\right\}, w^1 ; \lambda^1\right) \\
& \leq-\sum_{i=1}^t \sum_{k=1}^K\left(\frac{\eta_k}{2}-\frac{7}{2} M_k\right)\left\|w_k^{i+1}-w_k^i\right\|^2 \\
&-\sum_{i=1}^t \sum_{k=1}^K\left(\frac{\eta_k}{2}-\left(\frac{1}{\eta_k}+\frac{7 M_k}{2 \eta_k^2}\right) M_k^2\left(T_k+1\right)^2-\frac{M_k T_k^2}{2}\right)\left\|w^{i+1}-w^i\right\|^2 \\
&:=-\sum_{i=1}^t \sum_{k=1}^K \frac{\eta_k-7 M_k}{2}\left\|w_k^{i+1}-w_k^i\right\|^2 \\
&-\sum_{i=1}^t \sum_{k=1}^K \left(\eta_{k}-2\left(\frac{1}{\eta_{k}}-\frac{7 M_{k}}{2 \eta_{k}^{2}}\right) M_{k}^{2}\left(T_{k}+1\right)^{2}+M_{k} T_{k}^{2}\right)\left\|w^{i+1}-w^i\right\|^2 .
\end{aligned}
\end{equation}
\end{proof}

To simplify the following analysis, we define a variable $\alpha_k$, 
\begin{equation}\label{alpha}
    \alpha_k:=\eta_{k}-2\left(\frac{1}{\eta_{k}}-\frac{7 M_{k}}{2 \eta_{k}^{2}}\right) M_{k}^{2}\left(T_{k}+1\right)^{2}+M_{k} T_{k}^{2}.
\end{equation}

Additionally, by using Eq.\ref{alpha}, we can convert Lemma~\ref{lemma3} to

\begin{equation}\label{eq30}
            \begin{array}{l}
L\left(\left\{w_{k}^{t+1}\right\}, w^{t+1} ; \lambda^{t+1}\right)-L\left(\left\{w_{k}^{1}\right\}, w^{1} ; \lambda^{1}\right) \\
\leq-\sum_{i=1}^{t} \sum_{k=1}^{K}\left(\frac{\eta_{k}-7 M_{k}}{2}\right)\left\|w_{k}^{i+1}-w_{k}^{i}\right\|^{2}
-\sum_{i=1}^{t} \sum_{k=1}^{K} \alpha_k\left\|w^{i+1}-w^{i}\right\|^{2}.
\end{array}
\end{equation}

We observe the relation between $\eta_{k}$ and $7M_{k}$, as well as the value of $\alpha_k$, is critical for the following analysis, and make another assumption as the following,

\begin{assumption}\label{assump3}
We assume 
\begin{equation}\label{eq8}
    \alpha_k > 0,
    \eta_{k}>7 M_{k}, k=1, \cdots, K .
\end{equation}
\end{assumption}

Next, we discuss the convergence of the augmented Lagrange function.
\begin{lemma}\label{lemma4}
    \begin{equation}
        \lim _{t \rightarrow \infty} L\left(\left\{w_{k}^{t}\right\}, w^{t}, \lambda^{t}\right) \geq -\operatorname{diam}^{2}(W) \sum_{k=1}^{K} \frac{M_{k}}{2}>-\infty,
    \end{equation}
    where the diameter of the set $W$ is defined as $\operatorname{diam}(W):=\sup \left\{\left\|w_{1}-w_{2}\right\| \mid w_{1}, w_{2} \in W\right\}$.
\end{lemma}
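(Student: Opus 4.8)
The plan is to show that the augmented Lagrangian $L(\{w_k^t\}, w^t, \lambda^t)$ is bounded below by expressing it in terms of quantities that are themselves uniformly bounded, using the first-order optimality condition for $w_k^t$ together with the Lipschitz property in Assumption~\ref{assump1}. The key observation is Eq.~\ref{eq12}, which gives $\nabla F_k(w^{[t](k)}) = -\lambda_k^t$; this lets us eliminate the dual variable from the inner-product term. Substituting into the definition \ref{alf}, each summand becomes $\tfrac{|\mathcal{D}_k|}{|\mathcal{D}|}F_k(w_k^t) + \langle -\nabla F_k(w^{[t](k)}), w_k^t - w^t\rangle + \tfrac{\eta_k}{2}\|w_k^t - w^t\|^2$. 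The idea is to compare $F_k(w_k^t)$ against its first-order model at $w^{[t](k)}$: by $M_k$-smoothness (Assumption~\ref{assump1}), $F_k(w_k^t) \geq F_k(w^{[t](k)}) + \langle \nabla F_k(w^{[t](k)}), w_k^t - w^{[t](k)}\rangle - \tfrac{M_k}{2}\|w_k^t - w^{[t](k)}\|^2$ (this lower bound follows from Lipschitz gradients, the reverse of the usual descent-lemma upper bound).

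Next I would combine these: the linear terms in $\langle \nabla F_k(w^{[t](k)}), \cdot\rangle$ telescope so that what remains involves $F_k(w^{[t](k)})$ (a value of the loss, which I would assume bounded below, or absorb into a constant since $F_k \geq 0$ for typical losses), the nonnegative quadratic $\tfrac{\eta_k}{2}\|w_k^t - w^t\|^2$, and the single negative term $-\tfrac{M_k}{2}\|w_k^t - w^{[t](k)}\|^2$. Since all iterates $w_k^t$ and $w^t$ lie in $\mathcal{W}$ (or $W$), this last term is bounded below by $-\tfrac{M_k}{2}\operatorname{diam}^2(W)$. Summing over $k$ yields $L(\{w_k^t\}, w^t, \lambda^t) \geq -\operatorname{diam}^2(W)\sum_{k=1}^K \tfrac{M_k}{2}$ uniformly in $t$, and since Lemma~\ref{lemma3} (in the form \ref{eq30}) together with Assumption~\ref{assump3} shows the sequence $L(\{w_k^t\}, w^t, \lambda^t)$ is nonincreasing, the limit exists and satisfies the stated bound; in particular it is finite.

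The main obstacle I anticipate is handling the dual/linear terms cleanly: the inner product $\langle \lambda_k^t, w_k^t - w^t\rangle$ uses $\lambda_k^t = -\nabla F_k(w^{[t](k)})$ with a \emph{delayed} argument $w^{[t](k)}$, whereas the natural smoothness comparison for $F_k(w_k^t)$ is most conveniently anchored at that same delayed point — so one must be careful that the points match up and that no leftover cross terms in $\|w^{[t](k)} - w^t\|$ appear (if they do, bounding them again by $\operatorname{diam}(W)$ and folding constants is the fallback). A secondary point is justifying that $F_k$ is bounded below on $W$; I would either invoke continuity of $F_k$ on the (presumably compact) constraint set $W$, or note that for standard loss functions $F_k \geq 0$, so that these terms only help the lower bound. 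Once the per-iteration lower bound is in hand, monotonicity from Lemma~\ref{lemma3} closes the argument immediately.
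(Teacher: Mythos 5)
Your overall strategy matches the paper's: both arguments eliminate the dual variable via the identity $\lambda_k^t=-\nabla F_k(w^{[t](k)})$ from Eq.~\ref{eq12}, lower-bound $F_k$ with the descent lemma, bound the residual delayed terms by $\operatorname{diam}(W)$, and implicitly discard the $F_k$ values as nonnegative. The difference is where you anchor the descent lemma, and that is where your write-up has a gap. After the substitution, the dual inner product becomes $\left\langle\nabla F_k(w^{[t](k)}),\,w^t-w_k^t\right\rangle$, while your smoothness lower bound for $F_k(w_k^t)$ anchored at $w^{[t](k)}$ contributes $\left\langle\nabla F_k(w^{[t](k)}),\,w_k^t-w^{[t](k)}\right\rangle$; these do \emph{not} telescope away but leave the term $\left\langle\nabla F_k(w^{[t](k)}),\,w^t-w^{[t](k)}\right\rangle$. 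This survivor is not a cross term in $\lVert w^{[t](k)}-w^t\rVert$ that can be bounded by $\operatorname{diam}(W)$ alone --- it is linear in the gradient, whose norm is bounded by no assumption in the paper. The repair is a second application of the descent lemma, $F_k(w^{[t](k)})+\left\langle\nabla F_k(w^{[t](k)}),\,w^t-w^{[t](k)}\right\rangle\ge F_k(w^t)-\tfrac{M_k}{2}\lVert w^t-w^{[t](k)}\rVert^2$, which closes the argument but leaves two negative squared-distance terms (and the piece $-\tfrac{M_k}{2}\lVert w_k^t-w^{[t](k)}\rVert^2$ must be split by the triangle inequality before part of it can be absorbed into the $\tfrac{\eta_k}{2}\lVert w_k^t-w^t\rVert^2$ quadratic), so your route yields a constant of roughly $\tfrac{3M_k}{2}\operatorname{diam}^2(W)$ per client rather than the stated $\tfrac{M_k}{2}\operatorname{diam}^2(W)$.

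The paper avoids this by anchoring the descent lemma at $w^{t+1}$ itself: it splits $\nabla F_k(w^{[t+1](k)})=\nabla F_k(w^{t+1})+\bigl(\nabla F_k(w^{[t+1](k)})-\nabla F_k(w^{t+1})\bigr)$, applies smoothness at $w^{t+1}$ to the first piece, and handles the delay entirely through the Lipschitz bound on the gradient difference followed by Cauchy--Schwarz and AM--GM; one half of the resulting product is absorbed into the nonnegative coefficient $\tfrac{\eta_k-4M_k}{2}$ (positive by Assumption~\ref{assump3}) and the other half gives exactly $-\tfrac{M_k}{2}\lVert w^{[t+1](k)}-w^{t+1}\rVert^2\ge-\tfrac{M_k}{2}\operatorname{diam}^2(W)$. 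Two points in your favor: you correctly observe that the existence of the limit requires the monotonicity supplied by Lemma~\ref{lemma3} and Assumption~\ref{assump3} (the paper is silent on this), and you make explicit the hidden assumption that $F_k$ is nonnegative (or bounded below), which the paper also uses when it drops $F_k(w^{t+1})$ from its chain of inequalities but never states.
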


\begin{proof}
We express the augmented Lagrange function as follows
    \begin{equation}
        \begin{array}{l}
L\left(\left\{w_{k}^{t+1}\right\}, w^{t+1} ; \lambda^{t+1}\right) \\
=\sum_{k=1}^{K}\left(F_{k}\left(w_{k}^{t+1}\right)+\left\langle \lambda_{k}^{t+1}, w_{k}^{t+1}-w^{t+1}\right\rangle+\frac{\eta_{k}}{2}\left\|w_{k}^{t+1}-w^{t+1}\right\|^{2}\right) \\
\stackrel{(\mathrm{a})}{=} \sum_{k=1}^{K}\left(F_{k}\left(w_{k}^{t+1}\right)+\left\langle\nabla F_{k}\left(w^{[t+1](k)}\right), w^{t+1}-w_{k}^{t+1}\right\rangle\right. \\
\left.+\frac{\eta_{k}}{2}\left\|w_{k}^{t+1}-w^{t+1}\right\|^{2}\right) \\
=\sum_{k=1}^{K}\left(F_{k}\left(w_{k}^{t+1}\right)+\left\langle\nabla F_{k}\left(w^{[t+1](k)}\right)-\nabla F_{k}\left(w^{t+1}\right), w^{t+1}-w_{k}^{t+1}\right\rangle\right. \\
\left.+\left\langle\nabla F_{k}\left(w^{t+1}\right), w^{t+1}-w_{k}^{t+1}\right\rangle+\frac{\eta_{k}}{2}\left\|w_{k}^{t+1}-w^{t+1}\right\|^{2}\right) \\
\stackrel{\text { (b) }}{\geq} \sum_{k=1}^{K}\left(F_{k}\left(w^{t+1}\right)+\frac{\eta_{k}-3 M_{k}}{2}\left\|w_{k}^{t+1}-w^{t+1}\right\|^{2}\right. \\
\left.-M_{k}\left\|w^{[t+1](k)}-w^{t+1}\right\|\left\|w^{t+1}-w_{k}^{t+1}\right\|\right) \\
\geq \sum_{k=1}^{K}\left(\frac{\eta_{k}-4 M_{k}}{2}\left\|w_{k}^{t+1}-w^{t+1}\right\|^{2}-\frac{M_{k}}{2}\left\|w^{[t+1](k)}-w^{t+1}\right\|^{2}\right) \\
\stackrel{(c)}{\geq} -\operatorname{diam}^{2}(W) \sum_{k=1}^{K} \frac{M_{k}}{2} \geq-\infty . \\
\end{array}
    \end{equation}
    Expression (a) is deduced from Eq.\ref{eq12} in Lemma \ref{lemma1}. According to Cauchy-Schwartz inequality, we transfer the inequality to expression (b), And expression (c) is obtained by Assumption 2 and the definition of $\operatorname{diam}(W)$.
\end{proof}
Combine Assumption \ref{assump1}-\ref{assump3} and Lemma \ref{lemma1}-\ref{lemma4}, we conclude that the objective function $L\left(\left\{w_{k}^{t+1}\right\}, w^{t+1} ; \lambda^{t+1}\right)$ decreases at each iteration and converges to the set of stationary solutions (see Theorem \ref{theor1}).
\begin{theorem}\label{theor1}
\begin{equation}
    \begin{array}{l}
\lim _{t \rightarrow \infty}\left\|w_{k}^{t+1}-w_{k}^{t}\right\| \rightarrow 0, \forall k \\
\lim _{t \rightarrow \infty}\left\|w^{t+1}-w^{t}\right\| \rightarrow 0
\end{array}
\end{equation}
\end{theorem}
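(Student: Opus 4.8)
The plan is to combine the descent inequality of Lemma~\ref{lemma3}, in its compact form Eq.~\ref{eq30}, with the lower bound of Lemma~\ref{lemma4}, using Assumption~\ref{assump3} to guarantee that every coefficient on the right-hand side of Eq.~\ref{eq30} is strictly positive. The underlying principle is the classical observation that a real sequence which is non-increasing and bounded below converges, and therefore has summable consecutive increments.

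First I would note that, under Assumption~\ref{assump3}, both $\frac{\eta_k - 7M_k}{2} > 0$ and $\alpha_k > 0$ for every $k$, so the entire right-hand side of Eq.~\ref{eq30} is a sum of non-positive terms. Hence the sequence $t \mapsto L\left(\left\{w_k^{t+1}\right\}, w^{t+1}; \lambda^{t+1}\right)$ is non-increasing in $t$. By Lemma~\ref{lemma4} this sequence is bounded below by $-\operatorname{diam}^2(W)\sum_{k=1}^{K}\frac{M_k}{2} > -\infty$, so it converges to some finite limit $L^\star$.

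Next I would pass to the limit $t \to \infty$ in Eq.~\ref{eq30} and rearrange, obtaining
\[
\sum_{i=1}^{\infty}\sum_{k=1}^{K}\left(\frac{\eta_k - 7M_k}{2}\left\|w_k^{i+1}-w_k^{i}\right\|^2 + \alpha_k\left\|w^{i+1}-w^{i}\right\|^2\right) \leq L\left(\left\{w_k^{1}\right\}, w^{1}; \lambda^{1}\right) - L^\star < \infty .
\]
Since each $\frac{\eta_k - 7M_k}{2}$ and each $\alpha_k$ is a fixed positive constant and every summand is nonnegative, this forces $\sum_{i=1}^{\infty}\left\|w_k^{i+1}-w_k^{i}\right\|^2 < \infty$ for every $k$ and $\sum_{i=1}^{\infty}\left\|w^{i+1}-w^{i}\right\|^2 < \infty$. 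A convergent series has general term tending to zero, which gives $\left\|w_k^{t+1}-w_k^{t}\right\| \to 0$ for all $k$ and $\left\|w^{t+1}-w^{t}\right\| \to 0$, as claimed.

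The main obstacle is not this closing deduction, which is routine, but lies upstream in establishing Lemma~\ref{lemma3}: one must carefully telescope and reindex the shifted sums $\sum_{i=0}^{T_k}\left\|w^{t+1-i}-w^{t-i}\right\|^2$ appearing in Eq.~\ref{eq29}, absorbing the $O(T_k)$ boundary contributions near $i=1$ and near $i=t$ into the coefficient of $\left\|w^{i+1}-w^{i}\right\|^2$; this is exactly what produces the $(T_k+1)^2$ and $T_k^2$ factors inside $\alpha_k$. A secondary point worth verifying is the \emph{consistency} of Assumption~\ref{assump3}, namely that for given $M_k$ and $T_k$ there actually exists a learning-rate parameter $\eta_k$ making $\eta_k > 7M_k$ and $\alpha_k > 0$ hold simultaneously, so that the hypotheses are not vacuous and the theorem has content.
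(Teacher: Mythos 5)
Your proposal is correct and follows essentially the same route the paper intends: the paper gives no written proof of Theorem~\ref{theor1} beyond the remark that it follows by combining Assumptions~\ref{assump1}--\ref{assump3} with Lemmas~\ref{lemma1}--\ref{lemma4}, and your argument (positive coefficients from Assumption~\ref{assump3}, telescoped descent from Eq.~\ref{eq30}, lower bound from Lemma~\ref{lemma4}, hence summable squared increments whose general term must vanish) is exactly the standard deduction being invoked. Your closing observations about the reindexing in Lemma~\ref{lemma3} and the non-vacuousness of Assumption~\ref{assump3} are reasonable caveats but lie outside the statement being proved.
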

\textbf{Statement: }According to Theorem \ref{theor1}, we can conclude that the augmented Lagrangian function (Eq.\ref{alf}) can converge under the condition that Eq.\ref{eq8} is satisfied. 
Furthermore, subject to the convergence of Eq.\ref{alf}, we have derived the relationship between the training step size and latency, which is explicitly manifested in Eq.\ref{alpha} as well as Assumption \ref{assump3}.

\section{CONCLUSION}

Deep learning has been widely used in various fields, and the emerging distributed training gradually becomes popular in deep learning fields. To make distributed training resilient to updating delays, we provide a theoretical analysis on decentralized federated learning, and analyze the impact of delayed parameter updating. There are exciting avenues for further exploration and application of the derived statement to harness the power of decentralized federated learning in optimizing and enhancing performance of distributed learning.

\section*{ACKNOWLEDGMENT}
This work was supported by the National Science Foundation of China (61962045, 61502255, 61650205), the Program for Young Talents of Science and Technology in Universities of Inner Mongolia Autonomous Region (NJYT23104), the Open Foundation of State Key Laboratory of Networking and Switching Technology (Beijing University of Posts and Telecommunications) (SKLNST-2020-1-18).

\bibliographystyle{unsrt}
\bibliography{references}

\end{document}